
\documentclass[12pt]{article}
\usepackage{amsmath}
\usepackage{eurosym}
\usepackage{amsfonts}
\usepackage[onehalfspacing]{setspace}
\usepackage{geometry}
\usepackage{amsthm}
\usepackage{indentfirst}
\usepackage{titlesec}
\usepackage{xcolor}
\usepackage[sort,comma,authoryear,round]{natbib}
\usepackage{tikz}
\usepackage[breaklinks,colorlinks, citecolor=blue, linkcolor=black, urlcolor=blue,pdfsubject={LaTeX}]{hyperref}

\setcounter{MaxMatrixCols}{10}

\definecolor{darkred}{rgb}{0.7,0,0}
\definecolor{darkblue}{rgb}{0,0,0.7}
\usetikzlibrary{arrows}
\usetikzlibrary {patterns}
 \titleformat{\section}
  {\normalfont\fontsize{14pt}{14pt}\bfseries }{\thesection}{1em}{}
\titleformat{\subsection}
  {\normalfont\fontsize{12pt}{12pt}\bfseries }{\thesubsection}{1em}{}
\let\oldref\ref
\renewcommand{\ref}[1]{\textup{(\oldref{#1})}}
\newtheorem{cor}{Corollary}
\newtheorem{obs}{Observation}
\newtheorem{prop}{Proposition}
\newtheorem{df}{Definition}
\newtheorem{ex}{Example}

\newtheorem{tm}{Theorem}

\geometry{left=1.3in,right=1.3in,top=1.3in,bottom=1.3in}

 \input{tcilatex}
\begin{document}

\title{ Two-Person Bargaining when the Disagreement Point is Private
Information\thanks{
We thank the Advisory Editor and two anonymous referees for their comments.
The present paper is a substantially revised and extended version of the
paper \textquotedblleft How to compromise when compromising is inefficient".
We thank conference participants at EEA-ESEM 2018, Nanjing ICGT 2018 and
ESWC 2020 for helpful discussions.}}
\author{Eric van Damme\thanks{
Department of Economics, Tilburg University; Eric.vanDamme@uvt.nl.} \and Xu
Lang\thanks{%
Department of Economics, Southwestern University of Finance and Economics;
langxu1011@gmail.com.} }
\date{January 9, 2024 }
\maketitle

\begin{abstract}
We consider two-person bargaining problems in which (only) the disagreement
outcome is private (and possibly correlated) information and it is common
knowledge that disagreement is inefficient. We show that if the Pareto
frontier is linear, the outcome of an \textit{ex-post} efficient mechanism
cannot depend on the disagreement payoffs. If the frontier is non-linear,
the result continues to hold when the disagreement payoffs are independent
or there is a player with at most two types. We discuss implications of
these results for axiomatic bargaining theory and for full surplus
extraction in mechanism design. \bigskip

\noindent \textbf{Keywords:} bargaining problem, incomplete information,
axiomatic method, efficiency, disagreement, correlation

\noindent \textbf{JEL classification codes:} C71, C78, D82.
\end{abstract}

\thispagestyle{empty}\setcounter{page}{0}%
\renewcommand{\thefootnote}{\arabic{footnote}}%
\newpage

\section{Introduction}

In this paper, we analyze two-person bargaining problems in which the
players have private information \textit{only} about the disagreement
outcome and in which it is common knowledge that Pareto efficiency requires
agreement to be reached. Such a situation could, for example, arise when the
players know each other very well, but each is lacking information about the
other's outside options. We study whether, in such situations, these outside
options will influence the bargaining outcome.

As a simple example, suppose two players have to allocate an (indivisible)
dollar to one of them, with each player only getting 25 cents if they fail
to agree. Intuition suggests that the outcome will depend on the players'
risk attitudes and, indeed, under complete information, all standard
bargaining solutions predict that a more risk averse player will get the
dollar with a \textit{larger} probability. However, risk attitudes typically
are private information. Our results imply that, in this case, if it is
common knowledge that both players are not too risk averse (and, for
example, prefer the 50/50 lottery to disagreement), the degree of risk
aversion is irrelevant: any symmetric and \textit{ex-post} efficient
mechanism has to prescribe the 50/50 lottery irrespective of the players'
risk attitudes. More generally, we show that, for bargaining problems with a
linear Pareto frontier or independent disagreement payoffs, any \textit{%
ex-post} efficient mechanism has to be `disagreement point independent',
i.e., the players' (interim) expected utilities cannot depend on the values
of their outside options.

The situation that we analyze differs from the classic bilateral trade model
of \cite{MS83}. In that model, disagreement (no trade) \textit{can be}
efficient, and there does not exist an \textit{ex-post} efficient, incentive
compatible and individually rational mechanism. In our context, disagreement
is inefficient and -as in the example above- the players may even know a
lottery that Pareto dominates disagreement for sure. Furthermore, there
typically are many mechanisms that satisfy these three conditions; in the
example, any (constant) mechanism that always selects the same lottery close
to 50/50 satisfies them. The main result of the present paper is that only
such `constant' solutions can satisfy them: `disagreement point
independence' holds generically as long as the Pareto frontier is linear.

The model that we analyze amounts to a minimal change of the canonical
(Nash) bargaining game with complete information. In that context, it is
usually assumed that the disagreement point is inefficient, and Pareto
efficiency of the outcome is a standard axiom. The major bargaining
solutions that have been proposed for this situation all satisfy \textit{%
Disagreement Point Monotonicity} (DPM): if a player's disagreement payoff
increases, then the outcome (weakly) moves in her favor \citep{TH87}.%
\footnote{%
DPM should be distinguished from Monotonicity (M), the main axiom that
characterizes proportional solutions \citep{KA77} and which insists that, if
the set of possible agreements is enlarged and the disagreement outcome does
not change, none of the players becomes worse off.} DPM is an intuitive idea
and is a building block of the Harvard Negotiation Model \citep{FU81}. The
fifth principle of this negotiation method states that a bargainer should
enhance her Best Alternative to Negotiated Agreement (BATNA):
\textquotedblleft The better your BATNA, the greater your power (...). In
fact, the relative negotiating power of the two parties depends primarily
upon how attractive to each is the option of not reaching an
agreement.\textquotedblright (\textit{Getting to Yes, }p. 52\textit{).}%
\footnote{%
Of course, BATNA can only be satisfied in a weak sense: a player's
bargaining power does not decrease, but will sometimes increase, if her
outside option improves. The outside option principle \citep{BRW86} holds
that non-credible options are irrelevant.}

With complete information, efficiency and DPM, hence, are compatible. We
only change the canonical model by making the players' disagreement payoffs
private information and study whether efficiency is compatible with the
bargaining outcome being responsive to a player's outside option. Although
such responsiveness (called type-dependence below) is a weaker property than
monotonicity, we show that it conflicts with efficiency. Hence, in the\
`allocate the dollar example', when the players' risk attitudes are private
information, they cannot influence the lottery that the players will agree
upon.\footnote{%
In this example, type-independence corresponds to ordinality: the mechanism
only depends on the players' ordinal preferences (each prefers more money to
less), but not on the intensity of the players' preferences, i.e., how much
each dislikes disagreement.} Consequently, the incomplete information case
is very different from the one with complete information. BATNA does not
hold in this case. Although DPM is a very weak requirement in the latter
domain, type-dependence is very demanding in the former.

We admit that type-dependence is somewhat less appealing than DPM is under
complete information. If the disagreement payoff is private information, a
player can \textit{claim} that her outside option has improved, but, as the
other party may not be able to \textit{verify} this, why should he respond
to it? The claim might be ignored, in particular, if the players' outside
options are independent or positively correlated. However, if they are
negatively correlated, the second player can conclude that his partner's
claim is likely to be true, and, equally important, that she knows that he
knows this. Hence, under negative correlation, type-dependence has intuitive
appeal: a player with a better option might be a tougher bargainer and this
could translate into a different outcome. This certainly holds when the
disagreement utilities are perfectly negatively correlated as this case is
equivalent to having complete information. Hence, our intuition is that,
with negative correlation, type-dependence and efficiency can be satisfied
simultaneously. We, however, show that this intuition is wrong; our main
result is that if the Pareto frontier is linear (and the players' prior has
full support), any feasible, efficient mechanism must be \textit{%
type-independent}: how much a player dislikes disagreement cannot play any
role. While this result may not be surprising when the players' disagreement
utilities are independent, it is far from obvious that it holds for all
correlated priors.

Our{\ paper aims to contribute to the literature on axiomatic bargaining
with incomplete information. There is only a small literature on this topic.
Two solution concepts have been proposed and axiomatized: the generalized
Nash bargaining solution (\citealp{HS72}; \citealp{MY79}) and the neutral
bargaining solution \citep{MY84}. Occasionally also the \textit{ex-ante}
utilitarian solution (which }maximizes the sum of \textit{ex-ante} expected
utilities of the players over the set of all feasible mechanisms) is used,
as in \cite{MS83}. It can be shown that, in the above `allocate the dollar'
example, when the players' disagreement utilities are independent, the two
former concepts are type-dependent (and, hence, predict that there will be
some inefficiency), while the latter is efficient and, hence,
type-independent. Our results imply that no solution can be both efficient
and type-dependent.

This impossibility result raises the question of which of these properties,
efficiency or type-dependence, should get priority. One might argue that,
given that efficient solutions do not exist in general, efficiency should be
given up. We, however, think that the issue is not that clear: if players
cannot commit to a mechanism, then an inefficient mechanism is not
renegotiation-proof. The players might not agree to a mechanism \textit{%
ex-ante} if they know that they will renegotiate \textit{ex-post} once they
see that the mechanism's outcome is inefficient.

The remainder of the paper is organized as follows. Section 2 introduces our
main model. In Section 3 we show that, if the Pareto frontier is linear (and
the prior has full support), an efficient mechanism must be
type-independent. Section 4 shows that this result no longer holds when the
frontier is non-linear. In Section 5, we adapt our model to the transferable
utility case and show that the results from the Sections 3 and 4 continue to
hold. Section 6 discusses three applications of our results for bargaining
theory and Section 7 concludes. The more technical proofs are provided in
the Appendix.

\section{The model}

We consider a two-person bargaining problem $(A,a_{0})$, where $A$ is a
finite set of alternatives with $|A|\geq 3$ and $a_{0}\in A$ is the
disagreement (or status-quo) outcome. We assume that the\ players have Von
Neumann Morgenstern utility functions $u_{1}$ and $u_{2}$, denote by $%
u_{i}^{a}$ the utility that $i$ attaches to $a\in A$ and write $%
u^{a}=(u_{1}^{a},u_{2}^{a})$. We write $A^{0}=A\setminus \{a_{0}\}$ for the
set of (real) agreements and assume that the utilities of all agreements, $%
\{u^{a}|a\in A^{0}\}$, are common knowledge, but that $u_{i}^{0}:=$ $%
u_{i}^{a_{0}}$ is private information of player $i$. Without loss of
generality, we assume that different agreements have different utilities: if
$a,a^{\prime }\in A^{0}$ and $a\neq a^{\prime }$, then $u_{i}^{a}\neq
u_{i}^{a^{\prime }}$ for $i=1,2$. Writing $a_{i}$ for the best Pareto
efficient alternative for player $i$ from $A^{0}$, we thus have $a_{1}\neq
a_{2}$ and $a_{i}$ also is the worst Pareto efficient alternative from $%
A^{0} $ for player $j$. We write $u^{1}:=$ $u^{a_{1}}$ and $u^{2}:=$ $%
u^{a_{2}}$. We denote by $U^{0}=conv\{u^{a}|a\in A^{0}\}$ the set of all
utility pairs resulting from an agreement (where `$conv$' denotes the convex
hull) and by $\partial U^{0}$ the Pareto boundary of $U^{0}${; hence, }$%
\partial U^{0\text{ }}$is the set of all $u\in U^{0}$ for which there does
not exist $v\in U^{0}$ with $v_{1}\geq u_{1}$ and $v_{2}\geq u_{2}$, with at
least one of these inequalities being strict. (Note that, since there are no
indifferences, weak and strong Pareto efficiency coincide.)

The model reflects a real-world scenario in which the payoffs from all
possible agreements are common knowledge, but a player does not know how
much the other one dislikes disagreement. Such a situation could, for
example, arise when the players know each other very well, but each is
lacking information about the other's outside options.

As $u_{i}^{0}$ (the utility of disagreement) is private information of
player $i$, the possible values of $u_{i}^{0}$ constitute the types of this
player. For each player $i$, let $T_{i}$ denote the (finite) set of his
possible types and let $T=T_{1}\times T_{2}$ denote the product type set
with generic element $t=(t_{1},t_{2})$. Denote $\bar{t}_{i}:=\max T_{i}$ and
$\underline{t}_{i}:=\min T_{i}$. The players have a common prior $f$ on $T$,
with $f_{i}(t_{i})=\sum_{t_{j}\in T_{j}}f(t_{i},t_{j})>0$ for $i=1,2$ and
all $t_{i}\in T_{i}$. We write $f_{i}^{c}(t_{j}|t_{i})=f(t)/f_{i}(t_{i})$
for the associated conditional probability. We say that $f$ has full support
if $f(t)>0$ for all $t\in T$. Note that we allow the players' types to be
correlated. We write $\Gamma =(A,a_{0},T_{1},T_{2},u_{1},u_{2},f)$ for this
\textit{Bayesian bargaining problem}. As is standard in the literature %
\citep{MY79,MY84}, we assume that each player already knows his type when
the bargaining starts.

As far as disagreement is concerned, we make two assumptions. First, $%
u_{i}^{j}<\underline{t}_{i}<\bar{t}_{i}<u_{i}^{i}$ for each player $i$ .
This implies that player $j$'s best outcome is unacceptable to player $i$ ($%
i\neq j$) so that the players need to compromise. Second, we assume that it
is common knowledge that $a_{0}$ is Pareto inefficient, i.e., for every $%
t\in T$, there exists a lottery $\alpha (t)\in \Delta (A^{0})$ that Pareto
dominates $a_{0}$. Hence, if we write $U=conv[U^{0}\cup \{t|t\in T\}]$ and $%
\partial U$ for its Pareto frontier, we have $\partial U=\partial U^{0}$.
Note that, if $f$ has full support, it is common knowledge that the lottery $%
\alpha (\bar{t}_{1},\bar{t}_{2})$ is better than disagreement. When $f$ does
not have full support, the set of dominating lotteries can be contingent on $%
t$, with no lottery dominating all $t\in T$, as illustrated by case (b) in
Figure 1 below. This figure depicts situations in which the players' utility
functions have been normalized such that $u^{1}=(1,0)$ and $u^{2}=(0,1)$.
Such normalization is without loss of generality and, whenever convenient,
we will use it. In a normalized problem, $\partial U$ is a piece-wise linear
curve that connects $(0,1)$ to $(1,0)$. Figure 1 shows three cases of our
model: (a) a linear frontier with full support; (b) a linear frontier with a
triangular support; and (c) a non-linear frontier with full support. The
disagreement payoffs are distributed in the dotted regions.

\begin{tikzpicture}[
    scale=3.5,
    axis/.style={very thick, ->, >=stealth'},
    important line/.style={thick},
    dashed line/.style={dashed, thin},
    pile/.style={thick, ->, >=stealth', shorten <=2pt, shorten
    >=2pt},
    every node/.style={color=black}
    ]
    \draw[axis] (0,0)  -- (1.2,0) node(xline)[right]
        {$u_2$};
    \draw[axis] (0,0) -- (0,1.2) node(yline)[above] {$u_1$};
         \draw[important line] (0,1) coordinate (B1) circle
         (.4pt) node[left, text width=1em] {$1$} -- (1,0)
         coordinate (A1) circle (.4pt)     node[below, text
         width=0em] {$1$};

    \fill[green](A1) circle (.6pt)  ;
    \fill[green]  (B1) circle (.6pt)  ;
 \draw[pattern=dots] (0,0)node[below, text width=1em] {$O$}
 rectangle (0.3,0.5);

 \draw (0.4,1.4) node[below, text width=1em] {$(a)$};

     \draw[axis] (1.5,0)  -- (2.7,0) node(xline)[right]
        {$u_2$};
    \draw[axis] (1.5,0) -- (1.5,1.2) node(yline)[above] {$u_1$};

    \draw[important line] (1.5,1) coordinate (B2) circle (.4pt)
    node[left, text width=1em] {$1$} -- (2.5,0)  coordinate (A2)
    circle (.4pt)     node[below, text width=0em] {$1$};

    \fill[green](A2) circle (.6pt)  ;
    \fill[green]  (B2) circle (.6pt)  ;
 \draw[pattern=dots] (1.5,0)node[below, text width=1em] {$O$} --
 (2.4,0)--(1.5,0.9);

  \draw (1.9,1.4) node[below, text width=1em] {$(b)$};


      \draw[axis] (3,0)  -- (4.2,0) node(xline)[right]
        {$u_2$};
    \draw[axis] (3,0) -- (3,1.2) node(yline)[above] {$u_1$};

    \draw[important line] (3,1) coordinate (B3) circle (.4pt)
    node[left, text width=1em] {$1$} --     (3.8, 0.6)
    coordinate (C3) circle (.4pt)     node[below, text
    width=0em] { }-- (4,0)  coordinate (A3) circle (.4pt)
    node[below, text width=0em] {$1$};

    \fill[green](A3) circle (.6pt)  ;
    \fill[green]  (B3) circle (.6pt)  ;
  \fill[green]  (C3) circle (.6pt)  ;

 \draw[pattern=dots] (3,0)node[below, text width=1em] {$O$}
 rectangle (3.3,0.5);

   \draw (3.4,1.4) node[below, text width=1em] {$(c)$};

     \end{tikzpicture}

\begin{center}
Figure 1
\end{center}

Our model amounts to a minimal change of the canonical (Nash) bargaining
game with complete information; the only difference is that in our model the
disagreement utilities are private information. Our model is {inspired by
\cite{BP09}, in which the\ case with} $A=\{a_{0},a_{1},a_{2}\}$ and
independent, symmetric beliefs is investigated. That paper differs in its
assumptions and emphasis. It allows $a_{0}$ to be efficient (and, hence, to
be a real compromise), shows that this implies that no efficient mechanism
can be incentive compatible and derives some numerical properties of \textit{%
ex-ante} efficient incentive compatible rules. Our assumption that $a_{0}$
is inefficient makes the model more tractable. On the other hand, our model
is more general since we allow correlated (and asymmetric) beliefs and take
into account individual rationality constraints.

\subsection{Mechanisms and desirable properties}

Let a Bayesian bargaining game $\Gamma $ be given. Because of the revelation
principle \citep{MY79}, we can restrict ourselves to (incentive compatible)
direct mechanisms in which the message spaces are the type spaces. A \textit{%
mechanism} $\mu :T\rightarrow \Delta (A)$ assigns to each type profile $%
t=(t_{1},t_{2})$ a probability distribution $\mu (t)$ over the set of
alternatives. We write $\mu ^{a}$ (resp. $\mu ^{0}$) for the probability
that alternative $a$ (resp. $a_{0}$) is chosen. For each player $i$, the
\textit{interim} expected probability that alternative $a\in A$ is chosen if
she has type $t_{i}$ but reports $\hat{t}_{i}$ while player $j$ reports
honestly is given by
\begin{equation}
\mu _{i}^{a}(\hat{t}_{i}|t_{i})=\sum_{t_{j}\in T_{j}}\mu ^{a}(\hat{t}%
_{i},t_{j})f_{i}^{c}(t_{j}|t_{i}),
\end{equation}%
and the corresponding interim expected utility is given by%
\begin{equation}
U_{i}^{\mu }(\hat{t}_{i}|t_{i})=\mu _{i}^{0}(\hat{t}_{i}|t_{i})t_{i}+\sum_{a%
\in A^{0}}\mu _{i}^{a}(\hat{t}_{i}|t_{i})u_{i}^{a}.
\end{equation}

Mechanism $\mu $ is \textit{incentive compatible} (IC) if reporting
truthfully is a Bayesian Nash equilibrium. If $\mu $ is incentive
compatible, we denote the \textit{ex-post} equilibrium utility of player $i$
by $u_{i}^{\mu }(t)$; the \textit{interim} probability of $a\in A$ given $%
t_{i}$ and truthtelling by $\mu _{i}^{a}(t_{i})$, and the \textit{interim}
utility from truthtelling by $U_{i}^{\mu }(t_{i})$. From now on, when we
refer to a mechanism, we always mean that it is physically feasible ($\mu
(t)\in \Delta (A)$ for all $t\in T$) and incentive compatible.

We note that, since the players already know their types at the start of the
interaction and each one can unilaterally enforce disagreement, only
mechanisms that are \textit{individually rational} (IR), i.e., that satisfy $%
U_{i}^{\mu }(t_{i})\geq t_{i}$ for all $t_{i}\in T_{i}$ and $i=1,2$, can
actually be implemented. Although IR can, therefore, be seen as a necessary\
requirement for feasibility, some of our results do not depend on this
assumption and we will mention this property only when we actually need it.

We will mainly be interested in \textit{efficient mechanisms}. An efficient
mechanism generates a Pareto efficient outcome with probability 1.
Efficiency is a very desirable property: if $\mu $ is inefficient, then
players know from the\ start that, if they accept $\mu$, with positive
probability, they will have an incentive to renegotiate; foreseeing this
they\ may reject this mechanism. This argument has special force if $f$ has
full support, since in that case, the players even know a range of outcomes
to which they could renegotiate.

\begin{df}
Mechanism $\mu $ is efficient if $\Pr_{f}\{t:u^{\mu }(t)\in \partial U\}=1$.
It is strongly efficient if $u^{\mu }(t)\in \partial U$ for all $t\in T$.
\end{df}

Our concept of strong efficiency coincides with the classical \textit{ex-post%
} efficiency concept from \cite{HM83}. That paper discusses six efficiency
concepts, which differ depending on whether or not incentive constraints are
taken into account (classical or incentive efficiency) and on the point in
time (\textit{ex-ante}, \textit{interim} or \textit{ex-post}) at which the
evaluation is taking place. The next Proposition links efficiency to \textit{%
ex-ante} incentive efficiency; together with Proposition 2, it gives the
reader a better idea of our model. The proof is given in the Appendix.%
\footnote{%
A mechanism $\mu $ is \textit{ex-ante }incentive efficient if there does not
exist another mechanism $\nu $ which, when types are not yet known, both
players consider at least as good as $\mu $, with at least one player
viewing it as strictly better. A mechanism $\mu $ is \textit{interim }%
incentive efficient if there does not exist another mechanism that all types
of both players (weakly) prefer to $\mu $, with the preference being strict
for at least one type. Our proof shows that Proposition 1(2) also holds if
\textit{ex-ante} is replaced by \textit{interim}.}

\begin{prop}
Let $\Gamma $ be a bargaining game with $f$ having full support.

(1) If $\partial U$ is linear, an efficient mechanism is ex-ante incentive
efficient.

(2) If $\partial U$ is non-linear, this need not hold.

(3) There always exists a mechanism that is individually rational, efficient
and ex-ante incentive efficient.
\end{prop}

In a bargaining game $\Gamma $ in which $f$ has full support, there exist
many individually rational and efficient mechanisms; any constant mechanism
that Pareto dominates $a_{0}$ satisfies these conditions. On the other hand,
if $f$ does not have full support, no such mechanism may exist.

\begin{prop}[Existence and multiplicity]
Let $\Gamma$ be a bargaining game.

(1) If $f$ has full support, there exist multiple individually rational and
strongly efficient mechanisms.

(2) If $f$ does not have full support, an individually rational and strongly
efficient mechanism may not exist.
\end{prop}

The proof of the second part of this Proposition is deferred to Section 3,
where we will show that it follows almost immediately from Theorem 1.
Because of Proposition 2(2), we will focus mainly on the case with full
support. Note that, when the prior $f$ is independent, our assumption that $%
f_{i}(t_{i})>0$ for all $t_{i}$ implies that $f$ has full support.

In this paper, we investigate whether there exist efficient mechanisms that
depend non-trivially on the players' types, i.e., that take into account
some of the private information that the players have. As the game theoretic
approach to bargaining is welfaristic \citep{SE79}, real outcomes only
matter through their utilities; hence, we focus on the \textit{interim}
utilities associated with a mechanism. We say that a mechanism is \textit{%
type-dependent} if for at least one of the players, the \textit{interim}
utility function is not constant.

\begin{df}
Mechanism $\mu $ is type-dependent if, for at least one player $i$, the
interim utility function $t_{i}\rightarrow U_{i}^{\mu }(t_{i})$ is not a
constant function; it is type-independent if both $U_{1}^{\mu }$ and $%
U_{2}^{\mu }$ are constant functions.
\end{df}

We note that type-independence is a much weaker property than the mechanism
prescribing the same lottery for each $t$. Type-independence relates to the
\textit{interim} stage and there can be much variation \textit{ex-post}
without this influencing a player's \textit{interim} utility.\footnote{%
An explicit example to illustrate this fact is available from the authors
upon request.}

It is easily seen that, if the players' disagreement payoffs are
statistically independent, an efficient mechanism must be type-independent.
Efficiency\ requires that $\mu ^{0}(t)=0$ for all $t$ and if $\mu $
prescribes a more attractive lottery on $\partial U$ for type $t_{i}$ than
for some other type $t_{i}^{\prime }$, then clearly it is not incentive
compatible. Hence, we have the following result, which is independent of the
shape of the Pareto frontier, also holds if there are more than two players,
and also holds when $T_{i}$ is an interval. This result provides a
benchmark; from now on we will assume that $f$ is not independent. {\ }

\begin{obs}
If $f$ is independent ($f(t)=f_{1}(t_{1})f_{2}(t_{2})$ for all $t$), then
any efficient mechanism is type-independent.\footnote{\cite{MI12} has
derived a somewhat similar result in a different context. He studies the
allocation of two (identical) objects to two (\textit{ex-ante} identical)
players when monetary transfers are impossible and each player must get an
object. He shows that when the players' values are independent, one cannot
do better than random allocation; hence, the intensity of the players'
preferences cannot play a role.}
\end{obs}

\section{ The linear case}

Theorem 1 below is one of the main results of the paper. It states that the
above observation continues to hold when beliefs are correlated, provided
that $f$ has full support and the Pareto frontier $\partial U$ is linear.
Hence, even when type-dependence appears natural and appealing, such as when
types are strongly negatively correlated, it is incompatible with
efficiency. Although the case of a linear Pareto frontier is somewhat
special, it is interesting and relevant. It arises, for example, when there
is pure conflict, i.e., there are just three alternatives, $%
A=\{a_{0},a_{1},a_{2}\}$ with $a_{i}$ defined as above. Linearity also
results when the players can use monetary transfers to reach a compromise, a
case which we will discuss in more detail in Section 5.

\begin{tm}
If the Pareto frontier is linear and $f$ has full support, then any
efficient mechanism is type-independent.\footnote{%
We conjecture that Theorem 1 continues to hold when each $T_{i}$ is an
interval in $%
\mathbb{R}
$ and $f$ admits a continuous density. As there is only a small literature
on correlated equilibria in non-finite games, following \cite{HS89}, and
there are some subtle issues, we did not pursue this direction.}
\end{tm}

\begin{proof}[Proof of Theorem 1]
Let $\Gamma $ be a bargaining game as defined in Section 2 and assume $%
\partial U$ is linear. Without loss of generality, assume $\Gamma $ is
normalized such that $u_{1}+u_{2}=1$ for each $u\in \partial U$. Let $\mu $
be an efficient mechanism. Given the probability distribution $f$ on $T$,
define $G^{\mu }=<T_{1},T_{2},u_{1}^{\mu },u_{2}^{\mu }>$ as the game in
which the set of pure strategies of player $i$ is $T_{i}$ ($i=1,2$) and in
which $i$ gets payoff $u_{i}^{\mu }(t)$ if $t\in T$ is played. It is easy to
see that $\mu $ is incentive compatible if and only if $f$ is a correlated
equilibrium of $G^{\mu }$. Note that, since $f$ has full support and $\mu $
is efficient, $G^{\mu }$ is a constant-sum game: $u_{1}^{\mu }(t)+u_{2}^{\mu
}(t)=1$ for all $t\in T$. Let $v_{i}\in R$ be the value of this game for
player $i$; hence, $v_{1}+v_{2}=1$. For general (finite) zero-sum games \cite%
{FO90} already observed that the beliefs associated with pure strategies
that are played with positive probability in a correlated equilibrium are
optimal strategies, i.e., they yield at least the value against any strategy
of the opponent. The next lines of our proof build on (or follow from) this
insight. Note that, since $f$ has full support, all strategies $t_{i}$ are
played with positive probability in the correlated equilibrium $f$.
Furthermore, as a player can always play a maximin strategy (rather than
follow a recommendation to play $t_{i}$), we must have $u_{i}^{\mu
}(t_{i},f_{i}^{c}(t_{i}))\geq v_{i}$ for each $i$ and each $t_{i}$. None of
these inequalities can be strict, as in that case we would have $%
u_{i}(f)>v_{i}$ for the \textit{ex-ante} expected payoff of player $i$ in $f$
and, hence, $u_{j}(f)<v_{j}$ for $j\neq i$, which is impossible since player
$j$ can guarantee himself $v_{j}$. Hence, we have $U_{i}^{\mu
}(t_{i})=u_{i}^{\mu }(t_{i},f_{i}^{c}(t_{i}))=v_{i}$ for all $i$ and $t_{i}$%
. Consequently, the mechanism $\mu $ is type-independent.
\end{proof}

Following \cite{CM88}, we say that $f$ satisfies the full rank condition if,
for $i=1,2$, the set $\{f_{i}^{c}(t_{i})|t_{i}\in T_{i}\}$ of posterior
beliefs of player $i$ is linearly independent. Note that this condition is
generically satisfied if $|T_{1}|=|T_{2}|$. The next result shows that, if
the beliefs $f$ satisfy this condition and have full support, any efficient
mechanism $\mu $ must yield constant \textit{ex-post} utility, i.e., $%
u_{i}^{\mu }(t)=u_{i}^{\mu }(t^{\prime })$ for all $t$, $t^{\prime }$ and $i$%
. The intuition behind this result is that, if each $t_{i}$ appears with
positive probability in the correlated equilibrium, all incentive
constraints must hold with equality. Writing $n=|T_{1}|=|T_{2}|$, this gives
$2n(n-1)$ equations in $u^{\mu }$. On top of this, we have the $2n$ interim
utility constraints, $U_{i}^{\mu }(t_{i})=v_{i}$, encountered in the proof
of Theorem 1, which together yields $2n^{2}$ equations, just as much as
there are variables in $u^{\mu }$. The CM-condition guarantees that the
equations are independent; hence, there is a unique solution, which must be
constant.{\ }

\begin{prop}
If the Pareto frontier is linear, $f$ has full support and the Cr\'{e}%
mer-McLean full rank condition holds, then any efficient mechanism $\mu $ is
constant, i.e., if $t,t^{\prime }\in T$, then {$u_{i}^{\mu }(t)=u_{i}^{\mu
}(t^{\prime })$} for $i=1,2$.
\end{prop}

In the remainder of this Section we make four remarks that serve to put
Theorem 1 in perspective.\vspace{2mm}

\noindent \textbf{Remark 1: }$\boldsymbol{T}$\textbf{\ must be a product set.%
}

\noindent The assumption that $T$ is a product set, $T=T_{1}\times T_{2}$ is
routinely made in the literature (see e.g., \citet{BO15}, p.174). Theorem 1,
however, does not hold if this assumption is not satisfied. For example,
efficient type-dependent solutions can exist if the disagreement points lie
on a curve $C=\{t|t_{2}=\varphi (t_{1})\}$, i.e., the case of perfect
correlation in which Spearman's rank correlation coefficient $\rho =\pm 1$.
A simple example is as follows. Assume that each player can be of 2 types, $%
l $ or $h$, with $l<h$ and $f(l,h)=f(h,l)=\frac{1}{2}$. Consider the
symmetric mechanism $\mu $ that prescribes the lottery $p\ast
a_{1}+(1-p)\ast a_{2}$ (with $p>\frac{1}{2})$ if $(h,l)$ is reported and
disagreement if $(l,l)$ or $(h,h)$ is reported. If $h<p<1-l$, then $\mu $ is
incentive compatible and efficient, with $U^{\mu }(l)<U^{\mu }(h)$; hence, $%
\mu $ is type-dependent. A similar example can be constructed if the
disagreement outcomes are known to be very bad, i.e., if for each $t$ with $%
f(t)>0$, we have $t_{1}+t_{2}<\frac{1}{2}$.\vspace{2mm}

The above example relies on the mechanism prescribing an inefficient outcome
if the players report a type combination that has prior probability $0$.
This implies that the game $G^{\mu }$, defined in the proof of Theorem 1, is
no longer constant-sum. If we insist on having efficient outcomes for all
type combinations, $G^{\mu }$ is again zero-sum and the proof remains valid;
hence, we have:

\begin{cor}
If the Pareto frontier is linear, any strongly efficient mechanism is
type-independent.
\end{cor}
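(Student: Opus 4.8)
The plan is to adapt the argument of Theorem~1 to the strongly efficient case, where the crucial point is that we no longer need the full support assumption on $f$. Recall that Theorem~1 used full support for two purposes: first, to ensure that every type $t_i$ is played with positive probability in the correlated equilibrium $f$ of $G^\mu$ (so that Forges' conclusion applies to \emph{every} $t_i$), and second, to guarantee that $G^\mu$ is a constant-sum game on all of $T$. Under strong efficiency, $u^\mu(t)\in\partial U$ for \emph{all} $t\in T$, so the second requirement is met regardless of support: since $\partial U$ is linear with $u_1^\mu(t)+u_2^\mu(t)=1$, the game $G^\mu$ is constant-sum on the full type space, with no appeal to $f(t)>0$.

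The remaining issue is the first requirement. The construction of $G^\mu$ and the observation that $\mu$ is incentive compatible if and only if $f$ is a correlated equilibrium of $G^\mu$ go through verbatim; here I would note that incentive compatibility only constrains $\mu$ at type combinations $(t_i,t_j)$ with $f(t)>0$, but since the common prior satisfies $f_i(t_i)>0$ for every $t_i\in T_i$ (this is part of the standing assumptions on $\Gamma$), every pure strategy $t_i$ is still played with positive marginal probability in the equilibrium $f$. Therefore Forges' result still yields, for each $t_i$, that the conditional $f_i^c(t_i)$ is an optimal strategy for player $j$ and that $v_i=u_i^\mu(t_i,f_i^c(t_i))=U_i^\mu(t_i)$. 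Since the value $v_i$ does not depend on $t_i$, the interim utility function $U_i^\mu$ is constant, which is exactly ordinality.

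Concretely, the steps I would carry out are: (i) form the game $G^\mu$ as in Theorem~1; (ii) verify the IC--correlated-equilibrium equivalence, emphasizing that it only involves $t$ with $f(t)>0$; (iii) observe that strong efficiency plus linearity of $\partial U$ forces $u_1^\mu(t)+u_2^\mu(t)=1$ on all of $T$, so $G^\mu$ is constant-sum without invoking full support; (iv) invoke the marginal positivity $f_i(t_i)>0$ to apply Forges' characterization to every type; and (v) conclude $U_i^\mu(t_i)=v_i$ for all $t_i$, hence $\mu$ is ordinal.

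The main subtlety I anticipate is step~(iii): in the proof of Theorem~1 the constant-sum property was derived jointly from full support \emph{and} efficiency, whereas strong efficiency supplies it directly and unconditionally, so I must make sure that constant-sumness is needed on the entire matrix $T$ for Forges' theorem to apply, and confirm that it indeed holds there. Once this is in hand the rest is a direct transcription of the Theorem~1 argument, so the corollary follows with essentially the same three observations, the only change being that the second observation now rests on strong efficiency rather than on the combination of efficiency and full support.
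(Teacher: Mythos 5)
Your proposal is correct and follows exactly the route the paper intends: the paper itself does not spell out a separate proof of this corollary but merely remarks that ``an analysis similar to that in Theorem 1 can be applied,'' and you carry out precisely that analysis, correctly identifying that strong efficiency makes $G^{\mu}$ constant-sum on all of $T$ without full support, while the standing assumption $f_i(t_i)>0$ on the marginals suffices for Forges' characterization to apply to every type. One small imprecision worth noting: incentive compatibility does involve off-support entries $\mu(\hat t_i,t_j)$ with $f(\hat t_i,t_j)=0$ (as deviation payoffs when $f(t_i,t_j)>0$), which is exactly why strong efficiency, by controlling those entries too, is the right hypothesis here --- but this does not affect the validity of your argument.
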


This corollary allows us to prove the following Proposition, which in turn
proves the second part of Proposition 2 from Section 2.

\begin{prop}
If the Pareto frontier $\partial U$ is linear and there does not exist a
lottery on $\partial U$ that Pareto dominates all $t$ in the support of $f$,
then there does not exist a strongly efficient and individually rational
mechanism.
\end{prop}

\begin{proof}[Proof of Proposition 4]
Let the conditions be satisfied and assume that $\mu $ is a strongly
efficient mechanism. Then $\mu $ is type-independent, hence, it produces a
constant interim utility pair $u=(u_{1},u_{2})\in \partial U$. By the
conditions of the Proposition, there exist some $t\in T$ that is not Pareto
dominated by $u$. Hence, we must have $t_{1}\geq u_{1}$ or $t_{2}\geq u_{2}$%
. Since $t$ is not Pareto efficient, $t\neq u$; hence, one of these
inequalities must be strict. But if $t_{i}>u_{i}$, then $\mu $ is not
individually rational.
\end{proof}

\noindent \textbf{Remark 2: The result only holds for two players. }

\noindent Theorem 1 no longer holds if there are more than two players
(unless $f$ is independent). In the Appendix we construct a symmetric,
three-player example with correlated beliefs in which an efficient and
type-dependent mechanism exists.\vspace{2mm}

\noindent \textbf{Remark 3: Epsilon-efficiency}

\noindent With a weaker efficiency concept, type-dependent solutions become
possible. For example, define mechanism $\mu $ to be $\varepsilon $%
-efficient if $\Pr {}_{f}\{t|\mu ^{0}(t)\leq \varepsilon \}=1$; hence,
whatever the state of the world, disagreement occurs at most with
probability $\varepsilon $. It is easily seen that such $\varepsilon $%
-efficiency is compatible with type-dependence. We do not consider $%
\varepsilon $-efficiency to be an attractive concept. In the Introduction
and in Section 2 we argued that efficiency is a very desirable property.
That argument implies that an $\varepsilon $-efficient mechanism is just as
unsatisfactory as any other mechanism that is not fully efficient.\vspace{2mm%
}

{\noindent \textbf{Remark 4: Indirect mechanisms} }

\noindent Theorem 1 implies that any efficient incentive compatible outcome
can be implemented by a constant lottery, the most simple indirect
mechanism. Hence, players do not need to reveal any information about their
disagreement payoff. In contexts in which such information is sensitive,
this can be a desirable property.{\noindent }

\section{ The non-linear case}

In this Section, we investigate to what extent Theorem 1 can be generalized
if the Pareto frontier is non-linear (i.e., if it is only piece-wise
linear). We show that the result continues to hold if at least one player
has at most two types. We also provide an example in which each player has
three types and in which an efficient, individually rational, type-dependent
solution exists.

\begin{tm}
Suppose the Pareto frontier is non-linear and $f$ has full support. If $\min
(|T_{1}|,|T_{2}|)=2$, an efficient mechanism is type-independent. If $\min
(|T_{1}|,|T_{2}|)>2$, an efficient, individually rational and type-dependent
mechanism can exist.
\end{tm}

\begin{proof}[Proof of Theorem 2]
The proof of the first part of the theorem is given in the Appendix. To
prove the second part, we give a symmetric example with $|T_{1}|=|T_{2}|=3$.
Let $A=\{a_{0},a_{1},a_{2},a_{3}\}$ with $a_{0}$, $a_{1}$ and $a_{2}$ as in
Section 2, let the utility functions be normalized as in Section 2, and
assume $u_{i}(a_{3})=\frac{7}{10}$ for $i=1,2$. Furthermore, assume that
each player has three types $s_{1}$, $s_{2}$ and $w$, with $u_{i}^{0}(t)\leq
\frac{1}{2}$ for all $t$. The prior probability distribution over the
players' types is given in Table 1 where $\varepsilon >0$ is assumed to be
small.

\bigskip
\begin{tabular}{c|c|c|c}
$f$ & $s_{1}$ & $s_{2}$ & $w$ \\ \hline
$s_{1}$ & $\varepsilon $ & $\varepsilon $ & $\frac{1}{4}-2\varepsilon $ \\
\hline
$s_{2}$ & $\varepsilon $ & $\varepsilon $ & $\frac{1}{4}-2\varepsilon $ \\
\hline
$w$ & $\frac{1}{4}-2\varepsilon $ & $\frac{1}{4}-2\varepsilon $ & $%
4\varepsilon $%
\end{tabular}
\bigskip

Table 1 \bigskip

Note that the situation is fully symmetric and that the players' beliefs are
almost perfectly negatively correlated:
\begin{align*}
f_{i}^c(w|s_{1})=f_{i}^c(w|s_{2})\approx 1,\,\,\,f_{i}^c(w|w)\approx 0.
\end{align*}

Consider the following symmetric mechanism $\mu $. If each player reports a
type in $\{s_{1},s_{2}\}$ and the reports differ, then $a_{1}$ is selected;
on the other hand, $a_{2}$ is selected if the reports are the same. If
player 1 reports a type in $\{s_{1},s_{2}\}$ and player 2 reports type $w$,
then the lottery $(\frac{1}{6},\frac{5}{6})$ over $a_{1}$ and $a_{3}$ is
selected. Symmetry dictates the lottery $(\frac{1}{6},\frac{5}{6})$ over $%
a_{2}$ and $a_{3}$ if the roles of the players are reversed. Finally, if
each player reports $w$, then $a_{3}$ is selected for sure. Clearly, $\mu $
is efficient. Then the payoffs associated with this mechanism are as in
Table 2: \bigskip

\begin{tabular}{c|c|c|c}
$u^{\mu }$ & $s_{1}$ & $s_{2}$ & $w$ \\ \hline
$s_{1}$ & $(0,1)$ & $(1,0)$ & $(\frac{3}{4},\frac{7}{12})$ \\ \hline
$s_{2}$ & $(1,0)$ & $(0,1)$ & $(\frac{3}{4},\frac{7}{12})$ \\ \hline
$w$ & $(\frac{7}{12},\frac{3}{4})$ & $(\frac{7}{12},\frac{3}{4})$ & $(\frac{7%
}{10},\frac{7}{10})$%
\end{tabular}
\bigskip

Table 2\bigskip

Assume $\varepsilon $ to be small. If both players report honestly then $%
U_{i}^{\mu }(s_{1})=U_{i}^{\mu }(s_{2})\approx \frac{3}{4}$ and $U_{i}^{\mu
}(w)\approx \frac{7}{12}$; hence $\mu $ is type-dependent. Clearly, $\mu $
is also individually rational. If the other player tells the truth, then
reporting $s_{i}$ instead of $s_{j}$ results in the same payoff, while
misreporting as $w$ yields $\approx \frac{7}{10}<\frac{3}{4}$; if type $w$
reports $s_{i}$, then his expected payoff is $\approx \frac{1}{2}<\frac{7}{12%
}$; hence, $\mu $ is incentive compatible.
\end{proof}

Note that, in this example, we did not have to specify the disagreement
utilities of the types $s_{1},s_{2}$ and $w$; the example works as long as $%
t_{i}=u_{i}^{0}(t)\leq \frac{1}{2}$ for all $i$ and $t$. Hence, it is
allowed that $u_{i}^{0}(s_{1},t_{2})=u_{i}^{0}(s_{2},t_{2})$ for all $t_{2}$%
. In this case, $s_{1}$ and $s_{2}$ only differ in name: these types have
the same preferences and beliefs, and they can be thought of as a type $s$
having been split into two. One of the axioms \cite{HS72} use to
characterize their generalized Nash solution is \textquotedblleft player
splitting"; it insists that such splitting (keeping utilities and beliefs
the same) should not change the solution; see also \cite{WE92}. Our example
shows that \textquotedblleft player splitting" may not be as innocent as it
first might look: as soon as a player is split into two different decision
making agents, there can be coordination problems between them, and this can
change the outcome.

\section{Transferable utility}

In the model from Section 2, the players only have lotteries at their
disposal to reach a compromise. In this Section, we investigate the same
problem, but now with the assumption that compromises can also be reached by
making monetary transfers. Specifically, as in \cite{MS83} we assume that
the individuals are risk neutral and have additively separable utility for
money and the alternatives from $A$. Hence, we assume that each player $i$'s
utility function is given by $u_{i}(a,m_{i})=v_{i}(a)+m_{i}$, where $%
v_{i}(a) $ denotes the VNM-utility that player $i$ attaches to the
alternative $a\in A $ and $m_{i}$ is the amount of money that player $i$
holds \textit{ex-post}. In this specification, utility is transferable: if
player $i$ gives one unit of money to player $j$, $u_{i}$ decreases with one
unit while $u_{j}$ increases by $1$. Note that the assumption that money\
enters each player's utility function with coefficient $1$ implies that
utility becomes interpersonally comparable. More precisely, utility gains
(and losses) can be compared as each such difference is equivalent to a
certain amount of money, which both players value in the same way.

In the transferable utility literature, it is usually assumed that
constraints on money holdings are not binding\footnote{%
For example, \citet[p. 422]{MY91} defines transferable utility as there
being \textquotedblleft a commodity-called money-that players can \textit{%
freely} transfer among themselves, such that any player's utility payoff
increases by one unit for every unit of money that he gets." (Emphasis
added.) In cooperative game theory, TU is simply defined as, for each
coalition, there being a surplus that can be arbitrarily divided, while in
the matching literature it has been directly defined as the Pareto frontier
being a straight line with slope -1 (\citealp{CH17}, p.6, 27). The mechanism
design literature usually also assumes that transfers can be unbounded,
although it has been noted that this may influence the results (%
\citealp{CM88}, p. 1255).} which, among others, implies that Pareto
efficiency is equivalent to maximizing the sum of the players' utilities, $%
u_{1}+u_{2}$. In this case, the Pareto frontier is linear and Theorem 1
applies. In this Section, we investigate the more general case in which
budgets can be limited. This not only is more realistic, it is also
interesting since the Pareto frontier then can be non-linear. We will see
that a result as in Theorem 1 holds given some conditions on the players'
budgets, but certainly not in general.

\subsection{Model}

We first describe how the model from Section 2 is changed to fit the
TU-context. Let $A,a_{0},T$ and $f$ be as in Section 2. For each player $i$,
let $u_{i}$ be as above and assume that this player has an initial endowment
(budget) $b_{i}\geq 0$ of money. A \textit{bargaining problem with
transferable utility }is a tuple $\Gamma =(A,a_{0},T,f,v,b)$\textit{. }Note
that when player $i$ has budget $b_{i}$, the utility that his type $t_{i}$
assigns to disagreement is $t_{i}+b_{i}$. An (\textit{ex-post}) \textit{%
allocation} for $\Gamma $ is a triple $(a,m_{1},m_{2})$ with $a\in A$, $%
m_{i}\geq 0$ (for $i=1,2$) and $m_{1}+m_{2}=b_{1}+b_{2}$. As an allocation
is fully determined by the pair $(a,m_{1})$, we simplify notation by writing
$(a,m_{1})$ instead of $(a,m_{1},m_{2})$. We write $u(a,m_{1})$ for the
utility pair associated with the allocation $(a,m_{1})$; hence $%
u(a,m_{1})=(v_{1}(a)+m_{1},v_{2}(a)+b_{1}+b_{2}-m_{1})$. We say an
allocation is \textit{utilitarian} if it maximizes the sum of the players'
utilities, while $a\in A$ is called utilitarian if $(a,b_{1})$ is
utilitarian.

The players' ideal points $a_{1}$ and $a_{2}$ are defined as in Section 2,
and we continue to assume that $\underline{t}_{i}>v_{i}(a_{j})$ for each
player $i$ and $j\neq i$ (so that player $i$ will accept $a_{j}$ only when
he gets sufficient financial compensation). Given that utility now is
interpersonally comparable, we can no longer normalize $v_{1}$ and $v_{2}$
as in Section 2, as this would imply that $a_{1}$ and $a_{2}$ are equally
good from the social point of view. However, without loss of generality, we
can still normalize $v_{1}$ and $v_{2}$ such that $%
v_{1}(a_{2})=v_{2}(a_{1})=0$ and $v_{1}(a_{1})=1$. The former just pins down
the level of zero utility while the latter simply amounts to choosing a
convenient unit of account. With respect to disagreement, we assume that
there exists $a^{\ast }\in A$ such that
\begin{equation*}
t_{1}+t_{2}<v_{1}(a^{\ast })+v_{2}(a^{\ast })\,\,\text{for all}\,\,t\in T.
\end{equation*}

Note that this assumption is slightly weaker than the one we made in Section
2, as this inequality still allows $t=(t_{1},t_{2})$ to be Pareto efficient
when transfers are not possible. Clearly, all allocations with $a=a^{\ast }$
are Pareto efficient. On top of this, there can be other efficient
allocations, which are not utilitarian. For example, if $%
v_{1}(a_{2})+v_{2}(a_{2})<v_{1}(a_{1})+v_{2}(a_{1})$, the allocation $%
(a_{2},0)$ still is Pareto efficient: it is the best possible outcome for
player $2$ and player $1$ does not have any money to induce player $2$ to
accept $a_{1}$. In this case, if $A=\{a_{0},a_{1},a_{2}\}$ and $a_{1}$ is
utilitarian, all lotteries on the set $\{(a_{1},0),(a_{2},0)\}$ are Pareto
efficient. Below, we write $\partial U(b)$ for the Pareto frontier generated
by $\Gamma $, $U(b)$ for the utilitarian part of the frontier and $NU(b)$
for the non-utilitarian part.

To cover bilateral trade as in \cite{MS83}, we also allow {$|A|=2$; hence} $%
A=\{a_{0},a^{\ast }\}$, with $a^{\ast }$ as in the inequality above. In this
case, without loss of generality, we can assume that player $1$ prefers $%
a^{\ast }$ to $a_{0}$ , which implies that $a^{\ast }=$ $a_{1}$. To see that
this case indeed covers bilateral trade, view player $1$ as the buyer and
player $2$ as the seller, and interpret $a_{0}$ and $a^{\ast }$ as no trade
and trade, respectively. Then t{he only difference with \cite{MS83} is that
we assume it is common knowledge that trade is necessary to maximize
surplus. Note that, in this case, the allocation }$(a_{0},0)$ in which the
seller keeps the object and gets all the money, although not utilitarian, is
still Pareto efficient: $U(b)$ is the line segment between $u(a^{\ast },0)$
and $u(a^{\ast },b_{1}+b_{2})$, while $NU(b)$ is the line segment between $%
u(a_{0},0)$ and $u(a^{\ast },0)$.

In the remainder of this Section, we distinguish two cases: (1) At least one
of the $a_{i}$ is utilitarian (which, without loss of generality we assume
to be $a_{1}$); (2) None of the $a_{i}$ is utilitarian. We will see that, in
the first case, a result as in Theorem 1 continues to hold provided that
player $1$ has high enough budget as compared to player $2$, and that the
second case is very much like the non-linear case discussed in Section 4.

\subsection{At least one ideal point is utilitarian}

Let $\Gamma =(A,a_{0},T,f,v,b)$\textit{\ }be a bargaining problem with
transferable utility and, without loss of generality, assume that $a_{1}$ is
utilitarian.{\ If }$a_{2}$ is also utilitarian, then the Pareto frontier is
linear and Theorem 1 applies. We will refer to this case as a \textit{%
surplus division problem}. There are two other cases: $|A|\geq 3$ and $a_{2}$
is not utilitarian, or $|A|=2$ and {$A=\{a_{0},a_{1}\}$} (bilateral trade).
In both of these cases, Theorem 1 cannot be directly applied as the Pareto
frontier is non-linear. Nevertheless, we will show that, if the budget of
player $1$ is sufficiently large relative to that of player $2$, an
efficient and individually rational mechanism must exclusively select
utilitarian allocations. This implies that the proof of Theorem 1 can be
applied and that the result continues to hold. Hence, {we have the following
theorem.}

\begin{tm}
Let $\Gamma $ be a \textit{bargaining problem with transferable utility and
a prior }$f$ with full support.

(1) If both $a_{1}$ \textit{and }$a_{2}$ are utilitarian (hence, $\Gamma $
is a surplus division problem), then any efficient mechanism is
type-independent.

(2) If only $a_{1}$ is utilitarian, then, if $b_{1}>\frac{v_{1}(a_{1})+b_{2}%
}{\min_{t}f_{1}^{c}(t_{2}|t_{1})}-b_{2}$, any individually rational and
efficient mechanism is type-independent.
\end{tm}

\begin{proof}[Proof of Theorem 3]
It suffices to prove the second statement. First assume $|A|\geq 3$ and $%
a_{2}$ is not utilitarian. Let $a^{m}$ be the alternative from $A$ that is
the least preferred by player $1$ from the set of utilitarian alternatives.
The utilitarian part of the frontier, $U(b)$, then is the line segment
between $u(a^{m},0)$ and $u(a_{1},b_{1}+b_{2})$. Furthermore, it is easily
seen that if $a$ is not utilitarian, $(a,m_{1})$ can only be Pareto
efficient if $m_{1}=0$. Let $\mu $ be an efficient mechanism and suppose
that there exists some $t=(t_{1},t_{2})$ for which $\mu $ selects a point
from $NU(b)$ with positive probability. For this specific $t_{1}$, write $%
\Pi $ for the set of $t_{2}$ for which $\mu $ selects a non-utilitarian
lottery. Then $\Pi \neq \emptyset $ and for all such lotteries we have that
the expected payoff of player $1$ is less than $u_{1}(a^{m},0)=v_{1}(a^{m})$%
. Simplifying notation by writing $p(t_{2})=f_{1}^{c}(t_{2}|t_{1})$ and $\pi
=\sum_{t_{2}\in \Pi }p(t_{2})$, we have:
\begin{align*}
U_{1}(t_{1})<& \sum_{t_{2}\in \Pi }p(t_{2})v_{1}(a^{m})+\sum_{t_{2}\in
T_{2}\backslash \Pi }p(t_{2})(v_{1}(a_{1})+b_{1}+b_{2}) \\
=& \,\,\pi v_{1}(a^{m})+(1-\pi )(v_{1}(a_{1})+b_{1}+b_{2}) \\
\leq & \,\,v_{1}(a_{1})+(1-\pi )(b_{1}+b_{2}).
\end{align*}%
For $\mu $ to be individually rational, we must have $U_{1}^{\mu
}(t_{1})\geq t_{1}+b_{1}\geq b_{1}$; hence
\begin{eqnarray*}
v_{1}(a_{1})+(1-\pi )(b_{1}+b_{2}) &\geq &b_{1}\text{, or} \\
v_{1}(a_{1})+b_{2} &\geq &\pi (b_{1}+b_{2}).
\end{eqnarray*}%
Since $\pi \geq \min_{t_{2}}f_{1}^{c}(t_{2}|t_{1})\geq
\min_{t}f_{1}^{c}(t_{2}|t_{1})$, this condition can only be satisfied if
\begin{align*}
\min_{t}f_{1}^{c}(t_{2}|t_{1})(b_{1}+b_{2})\leq & \,\,v_{1}(a_{1})+b_{2}%
\text{ or} \\
b_{1}\leq & \,\,\frac{v_{1}(a_{1})+b_{2}}{\min_{t}f_{1}^{c}(t_{2}|t_{1})}%
-b_{2}.
\end{align*}

But the condition from the Theorem states that this inequality does not
hold. This implies that, under the conditions of the Theorem, a mechanism
that is efficient and individually rational must select utilitarian
allocations with probability 1. But then the assumption that $f$ has full
support together with the argument from the proof of Theorem 1 implies that $%
\mu $ must be type-independent.

Next, assume $|A|=2$ and {$A=\{a_{0},a_{1}\}$. Then $a_{1}$ is the unique
utilitarian alternative, hence, $a^{m}=a_{1}$ and the proof proceeds exactly
as in the case } $|A|\geq 3${.}
\end{proof}

As mentioned, case (2) of this theorem covers situations of bilateral trade
in which it is common knowledge that maximizing surplus requires trade. For
this case, the theorem says that, if the buyer's budget is sufficiently
large compared to the seller's, efficiency and individual rationality
require trade with probability 1, and a constant payment from the buyer to
the seller. It is interesting that the lower bound on the buyer's budget
depends on her beliefs: if she attaches small probability to some seller
types, this bound may be quite large.

The \ reader may be surprised about the large bound on $b_{1}$, and also
that $b_{1}$ must be large relative to $b_{2}$. Isn't it possible to specify
a simple bound for $b_{1}$ that is independent of $b_{2}$? For example, in a
normalized problem (with $v_{i}(a)\in \lbrack 0,1]$ for all $i$ and $a$)
isn't it sufficient that $b_{1}\geq 1$? The answer is ``no" as the following
example illustrates. This example modifies the one {with almost perfectly
negatively correlated beliefs constructed in Theorem 2 to fit the current
context. The mechanism $\mu $ selects the non-linear part of the frontier
(specifically $(a_{2},0)$) only }when both players {are of type $s$ and
miscoordinate. As this happens with small probability, the selected outcome
usually is on the linear part of the frontier, and by choosing the transfer
payments appropriately, we can make sure that all incentive constraints and
all individual rationality constraints are satisfied. }

\begin{ex}
If $b_{2}=0$ and $b_{1}=1$, there exists a \textit{bargaining problem with
transferable utility }$\Gamma $ for which there exists an individually
rational, efficient and type-dependent mechanism.

To construct an example, we add monetary transfers to the game used in the
proof of Theorem 2. Specifically, let $\Gamma $ be given by: $%
A=\{a_{0},a_{1},a_{2}\}$ with $v_{1}(a_{1})=1$, $v_{2}(a_{2})=V<1$, $%
v_{1}(a_{2})=v_{2}(a_{1})=0;T_{i}=\{s_{1},$ $s_{2},$ $w\}$ and $f$ as in
Table 1, where $\varepsilon >0$ is assumed to be small; $b_{2}=0$ and $%
b_{1}=1$. We claim that, for appropriate values of the transfers $x$ and $y$
and disagreement utilities $u_{i}(s_{1})=u_{i}(s_{2})=s>u_{i}(w)=w$, the
mechanism $\mu $ described in Table 3 is individually rational, efficient
and type-dependent.

\begin{tabular}{c|c|c|c}
$\mu $ & $s_{1}$ & $s_{2}$ & $w$ \\ \hline
$s_{1}$ & $(a_{2},0)$ & $(a_{1},1)$ & $(a_{1},x)$ \\ \hline
$s_{2}$ & $(a_{1},1)$ & $(a_{2},0)$ & $(a_{1},x)$ \\ \hline
$w$ & $(a_{1},y)$ & $(a_{1},y)$ & $(a_{1},\frac{x+y}{2})$%
\end{tabular}%
\bigskip

Table 3\bigskip

It is easy to see that $\mu $ is efficient and type-dependent; hence, we
only have to check the incentive constraints and the individual rationality
constraints. Note that for $\varepsilon $ small, we have $U_{1}^{\mu
}(s)\approx 1+x,U_{1}^{\mu }(w)\approx 1+y,U_{2}^{\mu }(s)\approx 1-y$ and $%
U_{2}^{\mu }(w)\approx 1-x$. Hence, the IR constraints are satisfied for
player $1$ if $x,y>s$. They are satisfied for player $2$ if $x,y<1-s$. These
conditions hold for intermediate values of $x$ and $y$. If type $s_{k}$ of
player $1$ reports $w$, he expects $\approx 1+y$; hence, he does not deviate
if $x>y$. His type $w$ will not deviate if $y>0$. Finally, a type $s_{k}$ of
player $2$ will not deviate if $y<x$,\ while type $w$ of this player will
not deviate if $1-x>\frac{V+1}{2}$. It is easy to find values of the
parameters for which the system has a solution. For example, if $V=0.4$, $%
s=0.2$ and $w=0.1$, then any $x,y$ with $0.2<x<0.3$ and $0.2<y<x$ satisfies
all inequalities.
\end{ex}

\subsection{None of the ideal points is utilitarian}

We now consider the case in which neither $a_{1}$ or $a_{2}$ is utilitarian.
Let $a^{\ast }$ be a utilitarian alternative; hence, $a^{\ast }\neq
a_{1},a_{2}$. In this case, the Pareto frontier contains at least three
linear segments, one with slope $-1$, another with slope $<-1$ and one with
slope $>-1$. The condition from Theorem 3 already suggests that this case is
qualitatively different from the one in the previous subsection. On the one
hand, we must have $b_{1}>b_{2}$, to ensure that non-utilitarian Pareto
allocations between $a_{2}$ and $a^{\ast }$ are not selected, but a
symmetric argument requires $b_{2}>b_{1}$. Clearly, these conditions cannot
be satisfied simultaneously. {We can use the example from the proof of
Theorem 2 (Section 4) to construct an efficient, individually rational and
type-dependent mechanism: for arbitrary $(b_{1},$}$b_{2})${, change the
mechanism $\mu $ from that example by never letting players exchange any
money. Formally, consider the mechanism $\nu $ defined by $\nu _{1}(t)=(\mu
_{1}(t),b_{1})$ and $\nu _{2}(t)=(\mu _{2}(t),b_{2})$; then $\nu $ trivially
satisfies the three desired properties. }

\begin{cor}
Let $\Gamma $ be a bargaining problem with transferable utility in which
neither $a_{1}$ nor $a_{2}$ is utilitarian. For every $b_{1},b_{2}\geq 0$
and $f$ with full support, there can exist a mechanism that is efficient,
individually rational and type-dependent.
\end{cor}

\section{Applications to Bargaining Theory}

In this Section, we discuss three implications of our results for the theory
of bargaining under incomplete information. In our first application, we
consider non-transferable utility and two players with equal bargaining
power. In the second application, utility is transferable and, as in the
mechanism design literature, one of the players (the informed principal) has
all the bargaining power so that she can dictate the mechanism. We show that
in both cases, \textit{ex-post} efficiency results in a very small class of
implementable bargaining solutions. In the third application, rather than
focusing on efficiency, we put emphasis on fairness, and on proportional
solutions (for the NTU-case) and the egalitarian solution (in the TU-case).
Among others, we show that, if the beliefs are independent, an \textit{%
interim} egalitarian solution almost always selects disagreement. This is in
sharp contrast to the case with complete information in which there always
is an outcome that is efficient (all surplus is divided) as well as
egalitarian (the players get an equal share of the surplus).

\subsection{\noindent Symmetric players}

Thus far, we focused on a fixed bargaining problem: all elements of $\Gamma $
were assumed given. The axiomatic approach to bargaining theory derives its
strength from insisting that the solutions of different problems should be
related to (or consistent with) each other. In such axiomatizations, the
symmetry axiom plays an important role. For example, in Nash's (1950)
axiomatization of his complete information solution, the Independence of
Irrelevant Alternatives Axiom allows one to transform any problem in an
equivalent linear and symmetric one. The symmetry axiom, which merely states
that it does not matter which player is called player $1$, is undisputed and
can be easily extended to games with incomplete information. We introduce
the following notion of symmetry.

\begin{df}
A bargaining problem $\Gamma $ (as in Section 2) is symmetric if $%
\{u(a)|a\in A^{0}\}$ is symmetric, $T_{1}=$ $T_{2}$ and $f$ is symmetric, $%
f(t_{1},t_{2})=f(t_{2},t_{1})$ for all $t_{1},t_{2}\in T_{1}$. If $\Gamma $
is symmetric, then a mechanism $\mu $ for $\Gamma $ is symmetric if $%
u_{1}^{\mu }(t_{1},t_{2})=u_{2}^{\mu }(t_{2},t_{1})$ for all $t_{1},t_{2}\in
T_{1}$.
\end{df}

The symmetry axiom requires that the solution $\mu (\Gamma )$ of a symmetric
bargaining problem $\Gamma $ is symmetric (see for example, \citealp{HS72}; %
\citealp{WE92}).

Suppose that two players have equal bargaining power and that $\Gamma $ is a
symmetric bargaining problem with $f$ having full support. Theorem 1 implies
that, if $\partial U$ is linear, symmetry and efficiency lead to an
(essentially) unique solution. For example, if $A=\{a_{0},a_{1},a_{2}\}$,
the solution must be $\mu (t)=\frac{1}{2}a_{1}+\frac{1}{2}a_{2}$ for all $t$%
. (Note that this symmetric solution is individually rational). Since
symmetric problems play a crucial role in axiomatizations, we believe that
Corollary \ref{Co:3} will also prove to be relevant for asymmetric problems
with incomplete information.

\begin{cor}
\label{Co:3} Let $\Gamma $ be a symmetric, linear bargaining problem with $f$
having full support. If $\mu $ is a symmetric and efficient mechanism, then $%
\mu $ is interim equivalent to a 50/50 lottery over the players' most
preferred alternatives.
\end{cor}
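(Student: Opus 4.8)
The plan is to derive the result as an immediate consequence of Theorem 1 together with the constant-sum structure and the two symmetry hypotheses. Since the 50/50 lottery over $a_1$ and $a_2$ yields each player the interim utility $\tfrac12 u_i(a_i) + \tfrac12 u_i(a_j) = \tfrac12$, it suffices to show that any symmetric efficient mechanism $\mu$ satisfies $U_i^\mu(t_i) = \tfrac12$ for every $i$ and every $t_i$.

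First I would invoke Theorem 1: because $\partial U$ is linear and $f$ has full support, $\mu$ is ordinal, so there are constants $v_1,v_2$ with $U_i^\mu(t_i)=v_i$ for all $t_i$. Next I would pin down $v_1+v_2$. As established in the proof of Theorem 1, full support together with efficiency force $u_1^\mu(t)+u_2^\mu(t)=1$ for \emph{every} $t$ (equivalently $\mu^0\equiv 0$ and $\mu^2\equiv 1-\mu^1$); here the full support assumption is what upgrades probability-1 efficiency to efficiency at each individual state. Taking the ex-ante expectation over $t$ and using that each $U_i^\mu$ is constant gives
\begin{equation*}
v_1+v_2=\sum_{t}f(t)\bigl(u_1^\mu(t)+u_2^\mu(t)\bigr)=1.
\end{equation*}

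The remaining step is to use symmetry to show $v_1=v_2$. Symmetry of the problem gives $T_1=T_2$ and $f(t_1,t_2)=f(t_2,t_1)$, hence equal marginals $f_1=f_2$ and $f_1^c(s\mid t)=f_2^c(s\mid t)$. Symmetry of the mechanism (invariance under relabeling the players, which exchanges the ideal points) gives $\mu^1(t_1,t_2)=\mu^2(t_2,t_1)$. Combining these, for any type value $t$ I would compute
\begin{equation*}
U_2^\mu(t)=\sum_{s}\mu^2(s,t)\,f_2^c(s\mid t)=\sum_{s}\mu^1(t,s)\,f_1^c(s\mid t)=U_1^\mu(t),
\end{equation*}
so $v_1=v_2$. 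Together with $v_1+v_2=1$ this yields $v_1=v_2=\tfrac12$, which is exactly the interim payoff profile of the 50/50 lottery over $a_1$ and $a_2$, establishing interim equivalence.

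I expect no serious obstacle, as the corollary is essentially a symmetrization of Theorem 1; the only point requiring care is fixing the precise formal meaning of a \emph{symmetric mechanism} — namely that relabeling the players, which swaps $a_1$ and $a_2$, leaves $\mu$ invariant, so that $\mu^1(t_1,t_2)=\mu^2(t_2,t_1)$ — and then checking that this, combined with the symmetry of $f$, indeed forces the two interim utility functions to coincide.
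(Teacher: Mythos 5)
Your proposal is correct and takes essentially the same route the paper intends: Corollary 2 is stated without a separate proof precisely because it follows from Theorem 1 (ordinality, so $U_i^\mu \equiv v_i$), from the constant-sum identity $u_1^\mu(t)+u_2^\mu(t)=1$ for all $t$ established under full support in the proof of Theorem 1 (giving $v_1+v_2=1$), and from the symmetry of $f$ and $\mu$ (giving $v_1=v_2=\tfrac12$), which is exactly the argument you spell out. Your care in formalizing mechanism symmetry as $\mu^1(t_1,t_2)=\mu^2(t_2,t_1)$ is appropriate and consistent with the paper's (weak) symmetry notion.
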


\subsection{\noindent Monopoly and surplus extraction}

{Let $\Gamma $ be a surplus division problem with incomplete information ($%
|T_{1}|>1$ or $|T_{2}|>1).$} Suppose player $1$ has all the bargaining power
and can dictate the mechanism; hence, she is an \textit{informed principal}.
As in \cite{CM88} and \cite{SE08}, say that player $1$ can \textit{extract
all surplus} if there exists a mechanism that is efficient and that always
gives player $2$ his disagreement payoff.

It has been shown that, if types are independent, an informed principal
cannot benefit from her private information at all; hence, in this
situation, she cannot extract full surplus \citep{MT14}. In contrast, \cite%
{SE08} has shown that the informed principal can generically extract all
surplus if there are at least three players and the players' types are
correlated. Specifically, full extraction of the surplus is possible if two
conditions are satisfied: the convex independence condition (CI) and the
identifiability condition (KS).\footnote{%
Condition CI (convex independence) was introduced in \cite{CM88}. It
requires that, for $i=1,2$, there does not exist $t_{i}^{\prime }$ such that
$f_{i}^{c}(t_{i}^{\prime })$ belongs to the convex hull of the set $%
\{f_{i}^{c}(t_{i}):t_{i}\in T_{i}\setminus \{t_{i}^{\prime }\}\},$where $%
f_{i}^{c}(t_{i})$ denotes the conditional distribution of $f$ on $T_{j}$
given $t_{i}$.
\par
The identifiability condition (KS) was introduced in \cite{KS08}. A
distribution $f$ satisfies KS if, for all full support distributions $g\neq
f $, there is at least one $i$ and one $t_{i}^{\prime }\in T_{i}$ such that $%
g^{c}_i(t_{i}^{\prime })$ does not belong to the convex hull of the set $%
\{f_{i}^{c}(t_{i}):t_{i}\in T_{i}\setminus \{t_{i}^{\prime }\}\}$. \cite%
{SE08} focus on the case with more than $2$ players. In the $2$-person case,
one of their requirements is equivalent to $f$ having full support.} When
there are only two players, these conditions cannot be satisfied
simultaneously: CI cannot hold if the types are independent, while KS can
only hold in that case (\citealp{KS08}, p. 135). The literature has left
open the question of what an informed principal can achieve in the $2$%
-person case. Corollary \ref{Co:4} fills this gap, it shows that,
generically, not all surplus can be extracted, not even if types are
strongly correlated. In other words, the result of \cite{SE08} is sharp.

\begin{cor}
\label{Co:4} Let $\Gamma $ be a surplus division problem in which player $1$
has all the bargaining power and $|T_{2}|>1$. If $f$ has full support, then
player $1$ cannot extract all surplus.
\end{cor}

The result is easy to see. If $\mu $ is a mechanism that allows player $1$
to extract all surplus, then player $2$'s interim utility must be constant
because of Theorem 1, while on the other hand it should be given by $%
U_{2}^{\mu }(t_{2})=${$t_{2}+b_{2}$}. If $|T_{2}|>1$, these conditions
cannot be satisfied simultaneously.

\subsection{\noindent Egalitarianism implies inefficiency}

Throughout this paper, we focused on efficiency and did not discuss the
second important dimension of bargaining: the requirement that the outcome
be fair. Of course, fairness can be defined in various ways; however, when
there is transferable utility, equal division of the gains from cooperation
is unambiguous and natural. In the case of complete information, such equal
division ($u_{1}-t_{1}=u_{2}-t_{2}$) of the maximal surplus ($%
u_{1}+u_{2}-t_{1}-t_{2})$ is known as the egalitarian solution. If
information is complete, there always is a unique egalitarian outcome. When
the disagreement outcome is private information, {it is natural to look at
equality at the \textit{interim} stage. If the players' budgets are $b_{1}$
and $b_{2}$, t{his amounts to the requirement $U_{1}^{\mu
}(t_{1})-t_{1}-b_{1}={U_{2}^{\mu }}(t_{2})-t_{2}-$}$b_{2}$ for all $t_{1}$
and $t_{2}$. This equality is trivially satisfied by the mechanism that
always selects $a_{0}$, which also satisfies IR. However, since
egalitarianism requires type-dependence (in fact, type-monotonicity),
Theorem 1 immediately implies that egalitarian mechanisms must be
inefficient: in surplus division problems with incomplete information and $f$
having full support, efficiency requires {$U_{i}^{\mu }(t_{i})$ to be
independent of }}$t_{i}$. {When the prior $f$ is independent, egalitarianism
leads to a very inefficient outcome, as the next Proposition shows: The
outcome will be disagreement, except possibly in the lowest state }$%
\underline{t}=(\underline{t}_{1},\underline{t}_{2})$; when there is no
budget to redistribute ($b_{1}=b_{2}=0$), egalitarianism even leads to
disagreement with probability $1$.

\begin{prop}
\label{Pr:5} Let $\Gamma $ be a surplus division problem with budgets $b_{1}$
and $b_{2}$ and incomplete information ($|T_{1}|>1$ or $|T_{2}|>1).$

(1) If the prior $f$ has full support, there does not exist an efficient and
egalitarian mechanism.

(2) If $f$ is independent and $b_{1},b_{2}\geq 0$, then an egalitarian
mechanism $\mu $ selects the disagreement outcome for all $t\neq \underline{t%
}$.

(3) If in (2) $b_{1}=b_{2}=0$, then $U_{i}^{\mu }(t_{i})=t_{i}$ for all $%
t_{i}\in T_{i}$ and $i=1,2$.\footnote{\cite{CL10} has already shown that,
for social choice problems with incomplete information and with transfers,
if types are independent, then interim incentive efficiency and monotonicity
(as in \cite{KA77}) are incompatible. We obtain a similar result, using a
different monotonicity concept.}\vspace{2mm}
\end{prop}

{\noindent \textbf{Remark 5: The value of money.} }We note that in case (2)
of the Proposition, although $\mu^{0}(t)=1$ for all $t\neq \underline{t}
$ (hence, there is disagreement, unless each player happens to be of the
lowest type), we can still have $U_{i}^{\mu }(t_{i})>t_{i}+b_{i}$ for all $%
t_{i}$ and $i=1,2$, provided that $b_{1},b_{2}>0$. The reason is that, in an
egalitarian mechanism $\mu $, we can have $\mu^{0}(\underline{t})<1$,
so that the lowest types, when they meet, can possibly generate a surplus.
Of course, if only these types would get more than their disagreement
payoff, $\mu $ would not be egalitarian; however, if each player has some
money, then this can be used to share this surplus with the other types, so
as to get an egalitarian outcome. If $b_{1}=b_{2}=0$, such redistribution is
not possible, and for this reason we get a stronger result in case (3). The
following is an explicit (symmetric) example. Let each player have two
types, $w=0.1$ and $s=0.4$ which are equally likely. Assume that $A$
contains an agreement $\widehat{a}$ that gives each player utility $0.3$ and
that each player has a budget $b\geq 0.1$. Let $\mu $ be given by $\mu (w,w)=%
\widehat{a}$ and disagreement otherwise; furthermore, there are only
transfers when the players have different types, in which case $w$ gives $0.1
$ to $s$. This mechanism is egalitarian, satisfies IR and IC, and yields
utilities $U_{i}^{\mu }(w)=0.15+b$ and $U_{i}^{\mu }(s)=0.45+b$.

The egalitarian solution insists that the two players gain equally from
cooperation, hence, it is applicable only in situations in which the
players' utility levels can be measured on a common scale, i.e., if utility
differences are interpersonally comparable. This implies that the
egalitarian solution cannot be applied in the model of Section 2 as there we
assumed that the two scales can be varied independently. In that context,
proportional solutions are still meaningful, see \citep{KA77,MY77}. For the
model of Section 2, we define a mechanism $\mu $ to be (\textit{interim})
\textit{proportional}, if there exists a pair of weights $\lambda
_{1},\lambda _{2}>0$ such that $\lambda _{1}(U_{1}^{\mu
}(t_{1})-t_{1})=\lambda _{2}(U_{2}^{\mu }(t_{2})-t_{2})$ for all $t_{1}$ and
$t_{2}$. If there is incomplete information, i.e., $|T_{1}|>1$ or $|T_{2}|>1$, the Pareto frontier is linear and $f$ has full support, Theorem 1 implies
that, if $\mu $ is efficient, $U_{i}^{\mu }(t_{i})$ does not depend on $%
t_{i} $. Hence, we have a similar impossibility result as in Proposition
5(1).

\begin{cor}
\label{Co:5} If $\Gamma $ is a bargaining game with $|T_{1}|>1$ or $%
|T_{2}|>1 $, a linear Pareto frontier $\partial U$ and a prior $f$ with full
support, there does not exist an efficient and proportional mechanism.
\end{cor}

\section{Conclusion}

We made a minimal change to the canonical two-person (Nash) bargaining model
by allowing the players' disagreement payoffs to be private information,
while maintaining the assumption that it is common knowledge that
disagreement is inefficient. In the resulting model, multiple \textit{ex-post%
} efficient (and individually rational) mechanisms exist, provided that the
players' prior over the possible disagreement points (type-pairs) has full
support. In Nash's complete information model, all standard solution
concepts insist on efficiency and satisfy Disagreement Point Monotonicity
(DPM): if one player's disagreement utility increases, then this player's
bargaining outcome (weakly) improves. We investigated whether, if the
disagreement payoffs are private information, efficiency is compatible with
a weaker version of DPM, type-dependence, and showed that, when the Pareto
frontier is linear or the players' disagreement payoffs are independent,
these properties cannot be satisfied simultaneously. This holds both in a
non-transferable utility context as well as in a TU-context. When the
Pareto frontier is non-linear, however, type-dependent solutions become
possible.

We attach more weight to our impossibility results for the linear case than
to the possibility results of the non-linear one. The reason is that linear
problems play an important role in axiomatizations of bargaining solutions.
For example, in the complete information case, Nash's IIA axiom allows to
transform a general problem into a linear one. It seems likely that linear
problems will also play an important role in axiomatizations when there is
incomplete information.

The incompatibility of the two properties forces one to make a choice
between efficiency and type-dependence: which of these principles should an
impartial mediator use? The answer can depend on the context and on which
commitment possibilities the players (and the mediator) have. In our view,
when renegotiation is possible, \textit{ex-post} efficiency is indispensable
as an axiom: rational players will not agree to a mechanism of which they
know \textit{ex-ante} that they might renegotiate away from its
recommendation once it has been received. Hence, the bargaining solutions
proposed by Harsanyi-Selten and Myerson, which generally select inefficient
outcomes, do not seem viable in this context. On the other hand, as argued
in \cite{HM83} and \cite{MY84}, the \textit{ex-ante} utilitarian solution is
not appealing either since it does not adequately deal with intra-player
fairness issues.\footnote{%
Also see \cite{KI17}. This paper uses Myerson's neutral solution to study
the question which type of mediator two bargainers will choose in a
situation of incomplete information. The model is symmetric, with each
player being either $s$ or $w$, and these types having different
preferences. The paper shows that, to avoid information leakage, each type
will propose the mediator that is preferred by type $s$, but this mediator
is \textit{ex-ante} inefficient.} Since these are, essentially, the only
general solutions that have been proposed in the literature on bargaining
games with incomplete information, we can conclude that, at this stage, we
do not yet have a good solution concept for bargaining problems with
incomplete information in which the players have limited commitment
possibilities.

In the applied literature (for example, \citealp{FU81}) it has been stressed
that a bargainer foremost should aim to improve his disagreement outcome as
this yields a better final outcome. This is a direct application of
Disagreement Point Monotonicity. Although, in our model, the set of
disagreement payoffs is exogenous and not the subject of choice and, hence,
the model cannot directly be used to evaluate this recommendation when there
is incomplete information, our paper suggests that this recommendation is
incomplete: when player $i$'s disagreement payoff is private information,
the bargaining outcome may be independent of how good or how bad it is.

\newpage

\begin{center}
{\large \textbf{Appendix: Proofs not given in the main text}}
\end{center}

\begin{proof}[Proof of Proposition 1]
Consider a normalized bargaining problem $\Gamma $. (1) If $\partial U$ is
linear and $\mu $ is an efficient mechanism, then, since $f$ has full
support, $u_{1}^{\mu }(t)+u_{2}^{\mu }(t)=1$ for all $t\in T$. As there
cannot exist a mechanism with a total \textit{ex-ante} sum of payoffs larger
than $1$, $\mu $ is \textit{ex-ante} incentive-efficient.

(2) Assume $\Gamma $ has as set of alternatives $A=\{a_{0},a_{1},a_{2},a_{3}%
\}$ with $u_{i}(a_{3})=0.7$, for $i=1,2$. Assume that each player has two
types, $s$ and $w$ that are equally likely and independent. Let $\mu $ be
the mechanism that selects $a_{1}$ when the players report the same type and
$a_{2}$ when the players report different types. Then $\mu $ is efficient,
with \textit{ex-ante} utility 0.5 for each player. Hence, $\mu $ is \textit{%
ex-ante} dominated by the mechanism $\nu $ that always selects $a_{3}$.
(Note that $\nu $ also dominates $\mu $ at the \textit{interim} stage.)

(3) Our assumptions imply that the intersection of $\partial U$ and the
individually rational region for $\bar{t}$, denoted by $\partial U^{\ast }$,
is non-empty. Pick any point $u\in \partial U^{\ast }$. Since $\partial U$
is the boundary of the convex set $U$, there exists a supporting hyperplane
at $u$; hence, there exists $\lambda =(\lambda _{1},\lambda _{2})\in
R_{+}^{2}$ such that $\lambda \cdot u\geq \lambda \cdot v$ for all $v\in U$.
Let $L\in \Delta (A^{0})$ be a lottery that generates the payoff $u$, and
let $\mu $ be the constant mechanism given by $\mu (t)=L$ for all $t$. Then $%
\mu $ is efficient and individually rational. Furthermore, if $\nu $ is a
mechanism with \textit{ex-ante} payoff $U^{\nu }$, then $\lambda \cdot u\geq
\lambda \cdot U^{\nu }$; hence, $\mu $ is \textit{ex-ante} efficient.
\end{proof}

\begin{proof}[Proof of Proposition 2]
(1) Let $f$ have full support and write $\bar{t}=(\bar{t}_{1},\bar{t}_{2})$.
The situation is as in Figure 1(c). The lottery $\alpha (\bar{t})$ on $A^{0}$
Pareto dominates $t$ for all $t\in T$ and the same holds for all lotteries
that are close to $\alpha (\bar{t})$, and for all efficient lotteries that
Pareto dominate these (if they exist). Any mechanism $\mu $ that always
selects the same efficient lottery from this (infinite) set satisfies the
conditions from the Proposition.

(2) As Figure 1(b) shows, if $f$ does not have full support, there may not
be a lottery that Pareto dominates all $t\in T$. Proposition 4 in Section 3
shows that in such a case an individually rational and strongly efficient
mechanism may not exist.
\end{proof}

\begin{proof}[Proof of Observation 1]
Suppose $\mu $ is efficient. For a given strategy of player $j$ and each
message $t_{i}$ of player $i$, $\mu $ induces a lottery $\mu (t_{i})\in
\Delta (A^0)$. All types of player $i$ have the same preference relation
over this set of lotteries, hence IC implies that $U_{i}^{\mu }(t_{i})$ must
be independent of $t_{i}$. Hence, $\mu $ is type-independent.
\end{proof}

\begin{proof}[Proof of Proposition 3]
The proof of Theorem 1 established that $u_{i}^{\mu
}(t_{i},f_{i}^{c}(t_{i}))=v_{i}$ for all $i$ and $t_{i}$. In fact, since
each $t_{i}$ appears with positive probability (since $f$ has full support)
in the correlated equilibrium, all such incentive inequalities must hold
with equality. If $v_{i}>u_{i}^{\mu }(t_{i}^{\prime },f_{i}^{c}(t_{i}))$ for
at least one $t_{i}^{\prime }\neq t_{i}$, then $f_{i}^{c}(t_{i})$ yields
player $j$ more than her value $v_{j}$, contradiction.

The CM-condition implies that both players have the same number of types.
Denote this number by $n$. First focus on player $1$. For $t_{1}\in T_{1}$,
write $f_{1}^{c}(t_{1})$ for the vector of conditional probabilities $%
f_{1}^{c}(t_{2}|t_{1})_{t_{2}\in T_{2}}$, and denote by $u_{1}^{\mu }(t_{1})$
the vector $u_{1}^{\mu }(t_1, t_2)_{t_{2}\in T_{2}}$ of payoffs of player $1$%
. Note that with this notation, we have the incentive inequalities are given
by
\begin{align}
f_{1}^{c}(t_{1})\cdot u_{1}^{\mu }(t_{1})& =v_{1}, & & \text{for all }t_{1},
\label{VA1} \\
f_{1}^{c}(t_{1})\cdot u_{1}^{\mu }(t_{1})-f_{1}^{c}(t_{1})\cdot u_{1}^{\mu
}(t_{1}^{\prime })& =0, & & \text{for all }t_{1}\text{ and }t_{1}^{\prime
}\neq t_{1}.  \label{VA2}
\end{align}%
The system \eqref{VA1} and \eqref{VA2} has $n^{2}$ equations and $n^{2}$
variables. This system can be written as
\begin{equation}
f_{1}^{c}(t_{1})\cdot u_{1}^{\mu }(t_{1}^{\prime })=v_{1},\,\text{for all}%
\,\,t_{1}\,\text{and}\,\,t_{1}^{\prime }\neq t_{1}.  \label{VA3}
\end{equation}%
The constraint matrix of \eqref{VA3} is given by
\begin{equation}
M=%
\begin{bmatrix}
\notag f_{1}^{c}(t_{1}^{1}) &  & \mathbf{0} \\
& \ddots &  \\
\mathbf{0} &  & f_{1}^{c}(t_{1}^{1}) \\
& ... &  \\
f_{1}^{c}(t_{1}^{n}) &  & \mathbf{0} \\
& \ddots &  \\
\mathbf{0} &  & f_{1}^{c}(t_{1}^{n})%
\end{bmatrix}%
\end{equation}%
where the elements in $T_{1}$ are indexed by $t_{1}^{1},\dots ,t_{1}^{n}$
and $\mathbf{0}$ denotes the row vector with $n(n-1)$ zeros. Notice that $M$
is an $n^{2}\times n^{2}$ matrix, and that $f$ having full rank implies that
$M$ has linearly independent rows and hence full rank. Consequently, by the
invertible matrix theorem, system \eqref{VA3} has a unique solution given by
$u_{1}^{\mu }(t)=v_{1}$ for all $t $. Similarly we can solve $u_{2}^{\mu
}(t)=v_{2}$ for all $t$. It is easily seen that the set of mechanisms that
implement this utility vector is non-empty.
\end{proof}

\begin{proof}[Proof of Remark 2: A three-player example]
Let $A=\{a_{0},a_{1},a_{2},a_{3}\}$ where each player $i\in \{1,2,3\}$
prefers $a_{i}$ most and $a_{j}$ least (for all $j\neq i$), i.e., $%
u_{i}(a_{i})=1$ and $u_{i}(a_{j})=0$. Each player has two types, $1$ and $2$%
, with disagreement values $u^{1}=0$ and $0<u^{2}<1/3$. Furthermore, for $%
t=(k,l,m)$, write $f(t)=f_{klm}>0$. Suppose that $f$ is symmetric ($%
f_{121}=f_{112}=f_{211}$, $f_{122}=f_{221}=f_{212}$) with, for some $%
\varepsilon \in (0,\frac{1}{12}]$,
\begin{equation*}
f_{111}=\frac{1}{2}-4\varepsilon ,\text{ }f_{222}=\frac{1}{2}-5\varepsilon
,\ f_{211}=\varepsilon ,\ f_{221}=2\varepsilon .
\end{equation*}

Consider the following mechanism $\mu :\{1,2\}^{3}\rightarrow \Delta (A)$:
If all players report the same type, then an equal randomization over $%
A\setminus \{a_{0}\}$ is implemented. If only two players $i$ and $j$ report
the same type, then these players (the majority) are favored and a fair
lottery over $\{a_{i},a_{j}\}$ is implemented. This mechanism is efficient
and type-dependent.
\end{proof}

\begin{proof}[Proof of Theorem 2]
In the main text, we already proved the second part. Here we show that the
first part holds: if $|T_{1}|=2$, $|T_{2}|=l\geq 2$ and $f(t)>0$ for all $t$%
, then an efficient mechanism is type-independent. The proof is by
contradiction. Assume that $\mu $ is a type-dependent and efficient
mechanism. Denote a type of player $1$ (resp. $2$) by $i$ (resp. $j$) and
simplify notation by writing $a_{ij}=u_{1}^{\mu }(i,j)$ and $%
b_{ij}=u_{2}^{\mu }(i,j)$ for the players' expected utilities in the
truthtelling equilibrium when the types are $i$ and $j$. We first consider
the case in which all entries in the first row of player $2$'s payoff matrix
$B$ are different, $b_{1j}\neq b_{1k}$ for all $j\neq k$. Without loss of
generality, we can assume that the columns are ordered such that $%
b_{11}>b_{12}>\dots >b_{1l}$. Since $f(t)>0$ for all $t$, incentive
compatibility implies that we must have $b_{21}<b_{22}<\dots <b_{2l}$, as
otherwise some column $b_{\cdot j}$ would be strictly dominated for player $%
2 $. Furthermore, since $\mu $ is efficient, we must have that the entries
in the first row of player $1$'s payoff matrix $A$ are increasing ($%
a_{11}<a_{12}<\dots <a_{1l}$), while those in the second row are decreasing (%
$a_{21}>a_{22}>\dots >a_{2l}$). Incentive compatibility for player $1$
implies that the payoffs of one row cannot all be larger than the payoffs in
the other row; hence, if $\alpha _{j}=a_{1j}-a_{2j}$, then $\alpha _{j}$ is
increasing in $j$ with $\alpha _{1}<0$ and $\alpha _{l}>0$. For $i=1,2$,
write $f_{i}^{c}$ for the conditional distribution of $j$ given that player $%
1$ has type $i$. The incentive constraints of player 1 are:
\begin{equation}
E(\alpha |f_{1}^{c})\geq 0\text{ and }E(\alpha |f_{2}^{c})\leq 0.  \label{IC}
\end{equation}%
Writing $p_{j}$ for the conditional probability that player $1$ is of type $%
1 $, if player $2$ has type $j$, the incentive constraints for any pair of
types $j<k$ of player 2 are given by
\begin{align*}
b_{1j}p_{j}+b_{2j}(1-p_{j})& \geq b_{1k}p_{j}+b_{2k}(1-p_{j}), \\
b_{1j}p_{k}+b_{2j}(1-p_{k})& \leq b_{1k}p_{k}+b_{2k}(1-p_{k}).
\end{align*}%
Subtracting the second inequality from the first, we get
\begin{equation*}
(b_{1j}-b_{1k}+b_{2k}-b_{2j})(p_{j}-p_{k})\geq 0.
\end{equation*}%
Hence, $p_{j}\geq p_{k}$ since the first term is positive. Hence, the odds
ratio is declining. Consequently, if player $1$ is of type $2$ he is
assigning higher probabilities to the higher values of $j$ as when he is of
type $1$. As $\alpha _{j}$ is increasing in $j$, this implies
\begin{equation*}
E(\alpha |f_{1}^{c})\leq \text{ }E(\alpha |f_{2}^{c}).
\end{equation*}%
Combining this with \eqref{IC} yields
\begin{equation*}
E(\alpha |f_{1}^{c})=\text{ }E(\alpha |f_{2}^{c})=0.
\end{equation*}%
Since $\alpha _{j}$ is increasing in $j$, this is possible only if $%
f_{1j}^{c}=f_{2j}^{c}$ for all $j$. Hence, $f$ is independent. But then, by
Observation 1, $\mu $ must be type-independent. This contradicts the
assumption that we started with. Consequently, if all entries in the first
row of the matrix $B$ are different, then a type-dependent, efficient
mechanism cannot exist.

Now assume that $b_{1j}=b_{1k}$ for some $j,k$. As above, we then must have $%
b_{2j}=b_{2k}$ as otherwise either column $j$ or $k$ would be strictly
dominated for player 2. Furthermore, since all payoff vectors are Pareto
efficient and $\mu $ is also incentive compatible for player $1$, we must
have $a_{1j}=a_{1k}$ and $a_{2j}=a_{2k}$. Hence the columns $j$ and $k$ of
the matrices $A$ and $B$ are identical. Now, consider the reduced problem in
which each set of identical columns is merged, say into a set $J$ and with $%
f_{iJ}=\sum_{j\in J}f_{ij}$ being the probability that state $(i,J)$ is
realized. The mechanism $\mu $ naturally induces a mechanism $\mu ^{\prime }$
on this reduced problem, which is efficient. The first part of the proof
applies to this reduced problem so that $\mu ^{\prime }$ is
type-independent. This implies that all entries in the reduced matrices $%
A^{\prime }$ and $B^{\prime }$ must be the same. But then there can only be
one equivalence class to start with, hence, $\mu $ is type-independent.
\end{proof}

\begin{proof}[Proof of Proposition \protect\ref{Pr:5}]
It suffices to prove the second and the third statements. First assume that
there is 2-sided incomplete information: {{$|T_{1}|>1$ and $|T_{2}|>1$. }}%
(2) Let $f$ be independent. Assume $\mu $ is egalitarian and individually
rational. {\ }Let $\mu _{i}^{0}(t_{i})$ be the probability that $a_{0}$ is
chosen when player $j$ is truthful and player $i$ has type $t_{i}$, and
write $U_{i}^{\mu }(t_{i})=\mu _{i}^{0}(t_{i})t_{i}+r_{i}(t_{i})$.{\ }%
Consider two types $x$ and $y$ of player $i$ with $x<y$. Two incentive
constraints for these types are given by $\mu _{i}^{0}(x)x+r_{i}(x)\geq \mu
_{i}^{0}(y)x+r_{i}(y)$ and $\mu _{i}^{0}(y)y+r_{i}(y)\geq \mu
_{i}^{0}(x)y+r_{i}(x)$, while egalitarianism requires that $\mu
_{i}^{0}(x)x+r_{i}(x)-x=\mu _{i}^{0}(y)y+r_{i}(y)-y$. Substituting this
expression in the incentive constraints and simplifying, we obtain $\mu
_{i}^{0}(y)(y-x)\geq y-x$. Since $y>x$, this implies $\mu _{i}^{0}(y)\geq 1$
. Clearly, as a probability, $\mu _{i}^{0}(y)\leq 1$; hence $\mu
_{i}^{0}(y)=1$. This establishes that $\mu _{i}^{0}(t_{i})=1$ for all $%
t_{i}\neq \underline{t}_{i}$, which implies that $\mu^{0}(t_{i},t_{j})=1 $
for all $t_{j}$, all $t_{i}\neq \underline{t}_{i}$ and $i=1,2$. This, in
turn, implies $\mu ^{0}(t)=1$ for all $t\neq \underline{t}$.

(3) On top of the assumptions from (2), assume $b_{1}=b_{2}=0$. Let $%
t_{i}\neq \underline{t}_{i}$. From (2), we know $\mu _{i}^{0}(t_{i})=1$,
hence elements from $A^{0}$ are selected with probability $0$, and since no
money is available, we have $r_{i}(t_{i})=0$. Consequently, $U_{i}^{\mu
}(t_{i})=t_{i}$ for all $t_{i}\neq \underline{t}_{i}$. Egalitarianism then
implies that also $U_{i}^{\mu }(\underline{t}_{i})=\underline{t}_{i}$. Hence
$U_{i}^{\mu }(t_{i})=t_{i}$ for all $t_{i}$ and $i=1,2$.

If there is 1-sided incomplete information, say {{$|T_{2}|=1$, }}then all
previous statements hold for $i=1$. In (3), egalitarianism implies that $%
U_{2}^{\mu }(\underline{t}_{2})=\underline{t}_{2}$.
\end{proof}

\end{document}